\newtheorem{theorem}{Theorem}
\newcommand{\inlineitem}[1][]{%
\ifnum\enit@type=\tw@
    {\descriptionlabel{#1}}
  \hspace{\labelsep}%
\else
  \ifnum\enit@type=\z@
       \refstepcounter{\@listctr}\fi
    \quad\@itemlabel\hspace{\labelsep}%
\fi}
\title{Mapping poverty at multiple geographical scales}
\author{Silvia De Nicolò\thanks{Department of Statistical Sciences, University of Bologna, Bologna, Italy.} \and Enrico Fabrizi\thanks{DISES, Catholic University of the Sacred Heart, Piacenza, Italy.} \and Aldo Gardini\thanks{\texttt{aldo.gardini@unibo.it}, Department of Statistical Sciences, University of Bologna, Bologna, Italy.}}
\date{}
\begin{document}

\maketitle

\abstract{\noindent Poverty mapping is a powerful tool to study the geography of poverty. The choice of the spatial resolution is central as poverty measures defined at a coarser level may mask their heterogeneity at finer levels. We introduce a small area multi-scale approach integrating survey and remote sensing data that leverages information at different spatial resolutions and accounts for hierarchical dependencies, preserving estimates coherence. We map poverty rates by proposing a Bayesian Beta-based model equipped with a new benchmarking algorithm accounting for the double-bounded support. A simulation study shows the effectiveness of our proposal and an application on Bangladesh is discussed.}\\
\\

\noindent \textbf{Keywords} --- Bayesian analysis, Beta regression, Demographic and Health Survey, Development economics, Small area estimation.

\section{Introduction}

 The first goal of the 2030 Agenda for Sustainable Development of the United Nations is to eradicate poverty in all its forms. Given that, current research trends are focusing on the complexity, spatial heterogeneity, and geographical roots of poverty to properly design out-of-poverty paths \citep{allard2017places, fan2021paths, zhou2022geography}. 
 Poverty has a multifaceted nature compounded by its spatial dimension \citep{gauci2005spatial}: as an example, we can mention that widely used classifications divide it into 
urban and rural poverty \citep{christiaensen2014poverty}. Moreover, the study of spatial poverty traps, theorized as a persistent poverty status caused by location characteristics, e.g., remoteness, poor infrastructure and services, 
and/or excessive migration costs \citep{kraay2014poverty} has received considerable attention.

For these reasons, poverty mapping is sparking interest in welfare economics and geography studies \citep{jean2016combining, hall2023review}, as an essential
methodology to support the investigation of the spatial distribution and regional characteristics of poverty.
In addition, if performed at a granular level, it enables a geographical targeting, namely the area-driven allocation of resources for poverty alleviation \citep{elbers2007poverty, liu2017spatio}. This is recognized as an effective and cost-saving tool for poverty reduction if compared to individual targeting as the latter is characterized by high administrative costs of data collection and follow-up. Conversely, regional targeting does not require household tracking, easing the supervision and management processes \citep{bigman2000geographical, galasso2005decentralized}. Furthermore, it enables the diversification of policy instruments employed from area to area that can be easily combined with other antipoverty measures.

When performing poverty mapping, a natural problem relates to the choice of the spatial scale which strictly depends on the purpose of the analysis. This adds up to a more technical problem that arises when aggregating areal data from neighboring zones. Indeed, such an aggregation induces  
a smoothing effect that can mask or distort information on spatial heterogeneity (patchiness) of the target poverty measure (e.g., see Figure \ref{fig:motivating}). This is known in geography as the scaling problem, and represents one of the aspects of the Modifiable Area Unit Problem \citep[MAUP,][]{kolaczyk2001multiscale,pratesi2016introduction}.
 This effect strongly depends on the heterogeneity of the aggregated areas and might lead to misleading conclusions. A simple strategy to deal with it is to undertake the analysis by using multiple scales at once \citep{kolaczyk2001multiscale}.


The choice of the spatial scale is also strongly connected with the availability of data sources.
In particular, surveys on income, consumption and/or living conditions are commonly used in poverty mapping but typically provide reliable estimates at high levels of spatial aggregation (e.g., regional or national).
Survey estimates are increasingly unreliable when considering finer spatial levels, translating into smaller and smaller sample sizes \citep{pratesi2016introduction}. In addition, remote sensing (RS) or mobile phone usage data are recently employed in poverty mapping, especially in developing countries \citep{jean2016combining, schmid2017constructing}. As opposed to survey data, such alternative sources are highly informative at fine spatial levels, while losing power when aggregated. 

\begin{figure}
		\centering
	\includegraphics[width=0.8\textwidth]{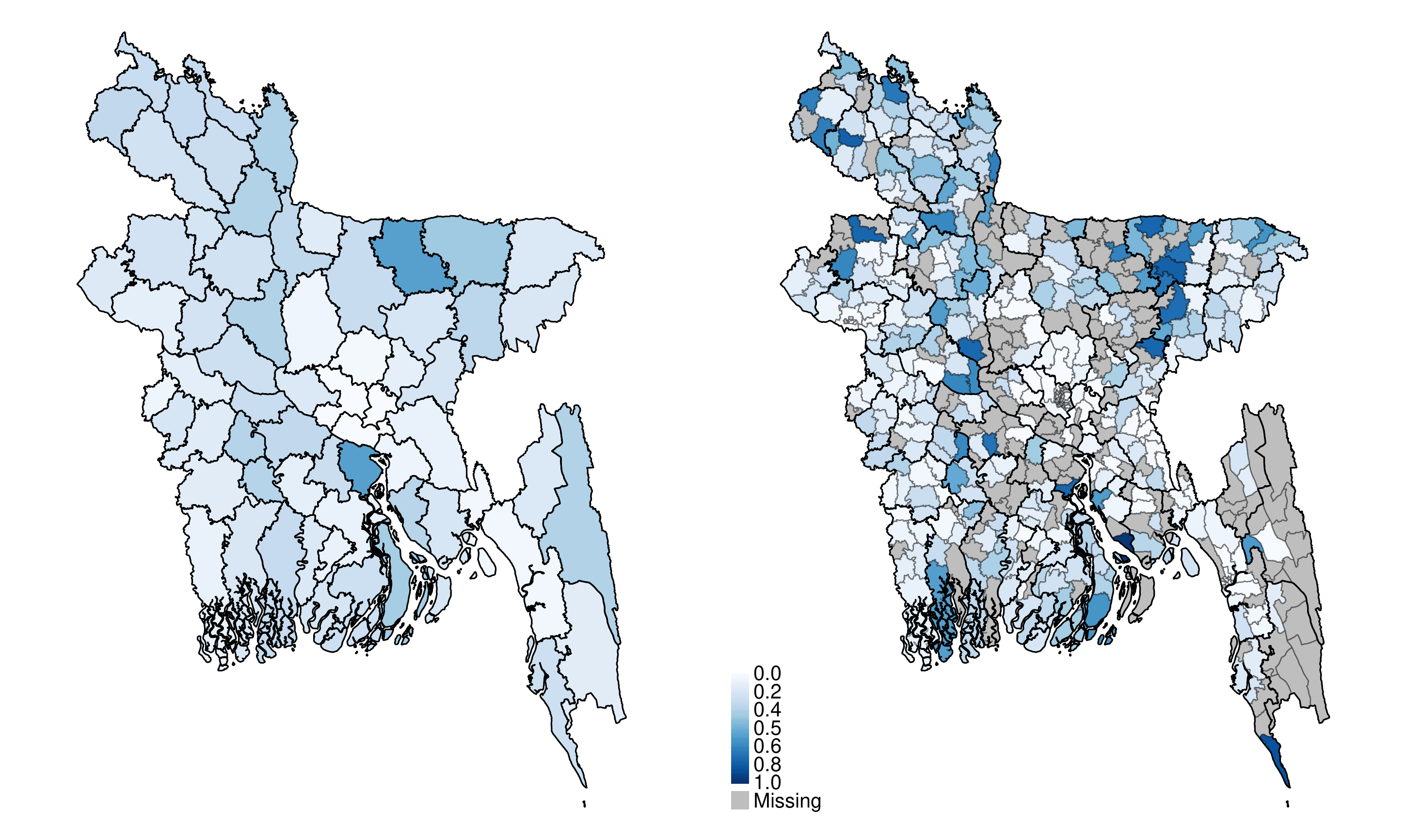}
		\caption{Poverty rate estimates at zila level (district) on the left-hand side and at upazila level (sub-district) on the right-hand side. Note that upazila level heterogeneity is masked when aggregating at the zila level.}
		\label{fig:motivating}
\end{figure}

A good practice in poverty mapping is to promote the efficient integration of multiple sources by means of statistical models \citep{steele2017mapping, zhao2019estimation}. To do so, we have to optimize the informative power that distinct sources have at different scales while accounting for hierarchical dependencies among levels. In this sense, a multi-scale procedure enables such an integration, while overcoming the aforementioned scaling problem. To the best of our knowledge, no explicit formalization of a multi-scale modeling approach has been proposed in the poverty mapping literature. 
Some proposals exploit geo-statistical models to obtain a continuous surface estimation over the study area \citep{puurbalanta2021clipped,sohnesen2022using}, avoiding the scaling problem. Such methods, however, require the exact location of each household which is difficult to obtain due to data disclosure restrictions. 

We propose a Bayesian multi-scale model for poverty mapping that combines survey and alternative sources, available at both finer and coarser spatial levels, and does not require the exact location of survey respondents. The quantity we target is a poverty rate, defined as the proportion of people whose income or living conditions indicator falls below a pre-specified threshold. Our methodology sets in the framework of the Small Area Estimation (SAE) literature, aiming at estimating population quantities in areas (or domains) not originally planned by the survey design \citep[][Ch. 4]{tzavidis2018start,rao2015small}, where area-specific sample sizes are likely to be too small to yield precise enough estimates, or even null. SAE methods have been widely used in poverty mapping \citep{molina2014small, casas2016poverty, corral2022guidelines, bikauskaite2022multivariate}, and most of them rely on models defined at the unit level. Conversely, we adopt an area-level approach, namely integrating the area direct estimates with areal covariates. We remark that this approach requires neither survey microdata nor covariates defined for the entire population.

In this paper, poverty rates are modeled through a Beta-based model \citep{liu2014hierarchical,fabrizi2016bayesian,janicki2020properties}. Specifically, we rely on the Extended Beta proposed in \citet{de2022extended}. Its features solve major issues encountered in poverty mapping such as the double-bounded support of poverty measures, the excess of zero/one values, and the intra-cluster correlation induced by the survey design. A new methodology to assess the uncertainty associated with out-of-sample areas is also proposed. We extend it to deal with multiple spatial scales, enabling us to overcome the scaling problem as well as to efficiently integrate multiple sources; in doing so, we provide a fresh contribution to poverty mapping literature.

Previous proposals that deal with multiple levels in SAE convey information in a single direction: from finer to coarser as the so-called sub-area or two-fold models \citep{torabi2014small, erciulescu2019model}, or from coarser to finer through geo-statistical models \citep{benedetti2022accounting}. Our proposal, instead, leverages the information streams in both directions by considering multiple predictors with shared random effects, introducing in the poverty mapping framework the approach proposed by \citet{aregay2016multiscale, aregay2017comparing} in the disease mapping literature.

A multi-scale approach must ensure the coherence of poverty estimates across levels: poverty rate estimates at a given low-resolution area must correspond to the population-weighted aggregation of higher-resolution area estimates. The misalignment among estimates obtained at multiple levels is settled by the so-called benchmarking techniques \citep{bell2013benchmarking}, reviewed in Section \ref{sec:benchmarking}. Conventional benchmarking methods do not account for the bounded support of poverty rates and may return invalid outcomes, possibly lying below zero or above one. We propose a new benchmarking procedure 
that suitably restricts the parameter space to the unit interval and relies on the idea of posterior projection \citep{dunson2003bayesian, sen2018constrained}. This represents a contribution to the literature as it provides benchmarked estimates and related uncertainty measures that are fully consistent with the bounded support.

Our proposal of multi-scale modeling equipped with a suitable benchmarking procedure is applied to map poverty in Bangladesh at two different spatial layers: zilas, i.e. administrative districts, and upazilas, i.e. sub-districts comparable with counties or boroughs. We use data from the Bangladesh Demographic and Health Survey \citep[DHS;][]{fabic2012systematic} by considering as poverty measure the proportion of people in the area falling in the first quintile of the national distribution of the Wealth Index \citep[WI;][]{pirani2014wealth}. WI is a composite score that summarizes the living conditions of a household and that can be read as a measure of socioeconomic status \citep{poirier2020approaches}.
The insufficient survey sample sizes in most zilas and upazilas (one-third of the upazilas are out-of-sample) drive us to integrate survey with RS data. Open access RS data encompasses night light radiance, population structure and population density, along with geographical characteristics, land use, infrastructures, and other social and economic features known to be poverty predictors at granular levels \citep{zhou2022geography}. Eventually, we compare the performance of our multi-scale model to alternatives by means of a design-based simulation study. 

The paper is organized as follows. Section \ref{sec:notation} introduces the notation; Section \ref{sec:models} is devoted to a detailed description of the  proposed small area model defined at multiple levels, prior specification, and posterior inference algorithms, along with a review of other standard models. The benchmarking proposal is illustrated in Section \ref{sec:benchmarking}. Section \ref{sec:simulation} presents a design-based simulation exercise that tests the performance of our model in comparison with alternatives. An application of poverty mapping in Bangladesh by using DHS and RS data is illustrated in Section \ref{sec:application}. Conclusions are drawn in Section \ref{sec:conclusions}.

\section{Notation and survey-based estimators}\label{sec:notation}

The statistical models discussed afterward primarily rely on survey data. In this section, we introduce the notation and basic survey quantities, such as the poverty rate estimators and related uncertainty measure, which constitutes the starting point of our modeling framework.
Let $\mathcal{U}$ denotes a finite population of $N$ individuals, on which we define two nested partitions, although the following setting can be easily generalized to multiple partitions.
The first partition divides $\mathcal{U}$ into $D$ non-overlapping domains $\mathcal{U}_d$ at a coarser level, labeled as \textit{areas}, each constituted by $N_d$ units. A second partition at a finer level divides each $\mathcal{U}_d$, $d=1, \dots, D$, into $M_d$ \textit{sub-areas} $\mathcal{U}_{dj}$ of size $N_{dj}$, so that $M=\sum_{d=1}^D M_d$ is the overall number of sub-areas. 

A complex survey sample of overall size $n$ is drawn from $\mathcal{U}$ and, on its turn, it can be partitioned into area and sub-area sub-samples of sizes $n_d$ and $n_{dj}$, $\forall d, j$ respectively.
The notations $\hat{\bar {Y}}_{d}$ and $\hat{\bar {Y}}_{dj}$ denote the estimators based on survey data (direct estimators) of a poverty rate for areas $d=1,\dots, D$ and the nested sub-areas $j=1,\dots, M_d$, respectively. 
Let $w_{dji}$ denotes the sampling weight of individual $i=1,\dots,n_{dj}$ pertaining to area $d$ and sub-area $j$, and $y_{dji}$ the indicator variable denoting the poverty status. As our target domains are not planned in the survey design, we employ an H{\'a}jek-type estimator \citep{hajek1971discussion} defined as
\begin{align}
    &\hat{\bar {Y}}_{dj}= \frac{\sum_{i=1}^{n_{dj}}w_{dji}y_{dji}}{\sum_{i=1}^{n_{dj}}w_{dji}},\quad j=1,\dots,M_d,\ d=1,\dots,D \label{direct_suba}\\
    &\hat{\bar {Y}}_{d}= \frac{\sum_{j=1}^{M_{d}}\sum_{i=1}^{n_{dj}}w_{dji}y_{dji}}{\sum_{j=1}^{M_{d}}\sum_{i=1}^{n_{dj}}w_{dji}}=\frac{\sum_{j=1}^{M_{d}}\left(\sum_{i=1}^{n_{dj}}w_{dji}\right)\hat{\bar {Y}}_{dj}}{\sum_{j=1}^{M_{d}}\sum_{i=1}^{n_{dj}}w_{dji}}. \label{direct_area}
\end{align}
Those estimators are approximately unbiased for the unknown population proportions denoted with $\theta_d$ and $\theta_{dj}$, assuming that $0<\theta_d<1$ and $0<\theta_{dj}<1$. 
Defining as $t$ the national poverty rate and as $s_d=N_d/N$ the known population share, it holds that $\sum_{d=1}^D s_d\theta_{d} = t$. Furthermore, the sub-area partition implies 
\begin{align} 
  \sum_{j=1}^{M_d} s_{dj}\theta_{dj}= \theta_{d},\quad d=1,\dots,D,   \label{benchmark1}
\end{align}
where $s_{dj}=N_{dj}/N_d$ with $\sum_d s_d=\sum_j s_{dj}=1$.

Survey estimates from \eqref{direct_suba} and \eqref{direct_area} may be highly imprecise due to the small sample sizes. To assess the reliability of estimates, we need to introduce a measure of uncertainty, included also in the modeling framework of Section \ref{sec:models}.
To account for the complex design, we focus on the effective sample sizes  $\tilde{n}_d$ and $\tilde{n}_{dj}$, i.e. the virtual sample sizes required by a simple random sample to retrieve estimators with standard errors equal to those of $\hat{\bar{Y}}_d$ and $\hat{\bar{Y}}_{dj}$, obtained under complex design. Focusing on the area case, we can express the sampling variance of $\hat{\bar{Y}}_d$ as $\theta_d(1-\theta_d)\tilde{n}_d^{-1}$, being the estimator of a proportion. The equivalent sample sizes incorporate the gains and losses in efficiency due to stratification, weighting, and the possibly strong intra-cluster correlation.  They can be expressed in terms of the design effect DEff$_d$, i.e. the ratio between the variance of the direct estimator and its variance under simple random sampling, as $\tilde{n}_d=n_d/\text{DEff}_d$. The same reasoning applies for $\hat{\bar{Y}}_{dj}$. We remark that, given the different correlation structures occurring within areas and sub-areas, generally $\tilde{n}_d\neq \sum_{j=1}^{M_d}\tilde{n}_{dj}$.

The estimated uncertainty measures allow the detection of spatial levels with unreliable estimates, thus requiring SAE techniques. In our framework, two different spatial levels are considered and we assume that they both present unreliable estimates as surveys are generally planned for national (or regional) aggregates.

\section{Small Area Models}\label{sec:models}

Among SAE techniques, we focus on models defined at the area level. 
Small area models may be also defined at the unit (or individual) level and generally requires the individual auxiliary information to be known for the entire population. As novel data sources, such as RS data, provide areal information, the area-level models seem particularly suitable for poverty mapping.

We consider a hierarchical Bayesian model for proportions based on an extended Beta distribution, introduced in \cite{de2022extended}, by extending it for multiple spatial scales. 
We recall its single-level specification in Section \ref{subsec:extendedbeta}, whereas multi-level specifications follow in Section \ref{subsec:multilevel}.

\subsection{The extended Beta model}
\label{subsec:extendedbeta}

The extended Beta (EB) model is defined as a mixture of a Beta distribution and two Dirac components defined on the boundaries 0 and 1 \citep{fabrizi2016hierarchical,janicki2020properties}. Let us consider the mean-precision parametrization of the Beta random variable \citep{ferrari2004beta} namely, if $Y |\mu, \phi \sim \text{Beta}(\mu \phi, (1-\mu)\phi)$, then its density function can be defined as
\begin{align*}
    f_{B}(y; \mu, \phi)= \frac{\Gamma\left(\phi\right)}{\Gamma\left(\mu\phi\right) \Gamma\left((1-\mu)\phi\right)} y^{\mu\phi-1} (1-y)^{(1-\mu)\phi-1},  \quad y \in (0,1).
\end{align*}
Under such parametrization, $\mu \in (0,1)$ is the expectation and $\phi \in (0, +\infty)$ is a dispersion parameter as $\mathbb{V}[y|\mu, \phi]=\mu(1-\mu)(\phi+1)^{-1}$. This parametrization provides a first intuitive justification for the use of the Beta as a likelihood for proportions, as the first two moments match those of the proportion estimator under simple random sampling.

We use the notation 
$\hat{\bar{Y}}_{dj}|\mu_{dj},\phi_{dj}, \pi_{0dj}, \pi_{1dj} \sim EB(\mu_{dj},\phi_{dj}, \pi_{0dj}, \pi_{1dj})$
to define the extended Beta model for sub-area $dj$, where the mixture is made explicit as follows
\begin{equation}
\begin{aligned}
    &\hat{\bar{Y}}_{dj}|\mu_{dj}, \phi_{dj}, \pi_{0dj}, \pi_{1dj} \sim \pi_{0dj} \times \boldsymbol{1}\{\hat{\bar{Y}}_{dj}=0\}+ \pi_{1dj} \times \boldsymbol{1}\{\hat{\bar{Y}}_{dj}=1\}+\\ 
    &\qquad\qquad+ (1- \pi_{0dj}- \pi_{1dj}) \times \text{Beta}\left(\mu_{dj}\phi_{dj},(1-\mu_{dj})\phi_{dj}\right)\boldsymbol{1}\{\hat{\bar{Y}}_{dj}\in(0,1)\},
\end{aligned}
\label{modello_upa}
\end{equation}
where $\boldsymbol{1}{A}$ is an indicator function assuming value one if $A$ occurs, and zero otherwise. The probabilities of observing 0 and 1 values are denoted with $\pi_{0dj}$ and $\pi_{1dj}$, respectively. As previously mentioned, $\mu_{dj}$ is the location parameter that can be further modeled specifying a linear predictor with possible covariates and random effects. Lastly, in line with most literature on area-level models, the dispersion parameter $\phi_{dj}$ is considered to be known to allow identifiability and replaced by $\tilde{n}_{dj}-1$.
Following \cite{de2022extended}, we assume that direct estimates equal to 0 or 1 are due to a censoring process, as $\theta_{dj} \in (0,1)$ and the observation of boundary values 0 or 1 occurs as a result of scarce sample sizes or strong intra-cluster correlations.

We model $\pi_{0dj}$ and $\pi_{1dj}$ in a parsimonious way, accounting for sample features and probabilistic assumptions \citep{de2022extended}. Firstly, $\mu_{dj}$ is interpreted as the expectation of non-censored observations. Secondly, the observed individual poverty status has a dependency structure which is simplified by assuming an exchangeable dependency among each pair of observations within areas.
Under this assumption, the two probabilities are defined as $\pi_{1dj}=\mu_{dj} \lambda_s^{m_{dj}-1}$ and $\pi_{0dj}=[1+\mu_{dj}(\lambda_s-2)]^{m_{dj}-1}/(1-\mu_{dj})^{m_{dj}-2}$.
The correlation between observations is modeled by $\lambda_s$ which has bounded support: 
$\max \bigg\lbrace 0, \max_{d} \frac{2\mu_{dj}-1}{\mu_{dj}}\bigg\rbrace \leq \lambda_s \leq 1. $

Under the EB model, the population proportion $\theta_{dj}$ is defined by its expectation 
\begin{equation}
\theta_{dj}=\mathbb{E}\left[\hat{\bar{Y}}_{dj}|\mu_{dj}, \lambda_s\right]=\left[1-\mu_{dj} \lambda_s^{m_{dj}-1}-\frac{[1+\mu_{dj}(\lambda_s-2)]^{m_{dj}-1}}{(1-\mu_{dj})^{m_{dj}-2}}\right]\mu_{dj}+\mu_{dj} \lambda_s^{m_{dj}-1}.
\label{theta}
\end{equation}

In the following subsections, the EB is assumed for area-specific direct estimates, coherently adjusting the notation as
$\hat{\bar{Y}}_{d}|\mu_{d}, \phi_d, \pi_{0d}, \pi_{1d} \sim EB(\mu_{d},\phi_{d}, \pi_{0d}, \pi_{1d})$ and with correlation parameter $\lambda_a$.

\subsection{Small area models on multiple spatial scales}
\label{subsec:multilevel}

As mentioned before, poverty has a geographical pattern that varies with respect to the spatial scale of measurement. This feature can be modeled through proper tools known in spatial statistics as multi-scale models. A widespread approach is to account for the scaling effect by exploiting a likelihood factorization holding for Gaussian and Poisson models \citep{kolaczyk2001multiscale}. This is popular in the disease mapping framework, where responses are counts of rare events occurrences \citep{louie2006multiscale}, which is not our case.  Another way to implement multi-scale models is to follow the proposal by \citet{aregay2016multiscale,aregay2017comparing}, building models with distinct linear predictors for each spatial scale that share common random effects. It represents an interesting procedure as it is applicable to any distributional assumption for the response. This is particularly useful since we need Beta-based models to correctly account for the bounded support of poverty rates. For this reason, we extend the approach of \citet{aregay2016multiscale} to deal with poverty rate estimation through the EB likelihood. 

In the area-level models literature, the sole proposal that handles multiple spatial scales is the sub-area (or two-fold) model, usually outlined under Gaussian assumptions  \citep{torabi2014small,erciulescu2019model}. However, it cannot be considered as a proper multi-scale model since only sub-area direct estimates and associated uncertainty measures are employed. In this way, the survey information available at the coarser level is discarded, even if more reliable. 
In contrast, our modeling proposal, labeled as shared multi-scale model, employs survey data at both levels and is presented in Section \ref{sec:shared}. Sub-area model with EB likelihood is outlined in Section \ref{sec:subarea} and, in Section \ref{sec:indep} we define as a benchmark a multi-scale model without shared random effects, labeled as independent multi-scale model.

\subsubsection{Shared multi-scale model}\label{sec:shared}

The proposed shared multi-scale (S-MS) model is defined on two distinct spatial levels:
\begin{align}
    &\hat{\bar{Y}}_{d}|\mu_{d}, \phi_{d},  \pi_{0d}, \pi_{1d} \stackrel{ind}{\sim} EB(\mu_{d},\phi_{d}, \pi_{0d}, \pi_{1d}),\quad d=1,\dots,D,\nonumber\\
        & \text{logit}\left(\mu_{d}\right)=\alpha_a+\mathbf{x}^T_{d}\boldsymbol\beta_a+u_{d};\label{eq:mod_shared1}\\
        &\hat{\bar{Y}}_{dj}|\mu_{dj}, \phi_{dj}, \pi_{0dj}, \pi_{1dj} \stackrel{ind}{\sim} EB(\mu_{dj},\phi_{dj}, \pi_{0dj}, \pi_{1dj}),\quad j=1,\dots M_d,\ d=1,\dots,D,\nonumber\\
        & \text{logit}\left(\mu_{dj}\right)=\alpha_s+\mathbf{x}^T_{dj}\boldsymbol\beta_s+v_{dj}+u_d;\label{eq:mod_shared2}
\end{align}
where $\mathbf{x}_{d}$ and $\mathbf{x}_{dj}$ are vectors of area and sub-area covariates. The corresponding coefficients $\boldsymbol\beta_a$ and $\boldsymbol\beta_s$ differ between predictors \eqref{eq:mod_shared1} and \eqref{eq:mod_shared2} to account for changes in the functional relationship induced by the spatial scale (ecological bias).
Note that such a model promotes an exchange of information between levels through the use of the shared area-specific random effect $u_d$. In such a way, the information nested at finer levels directly contributes to poverty estimation at coarser levels, enabling the communication flow to be two-ways. Moreover, $u_d$ accounts for the correlation between levels.

The sub-area population proportion $\theta_{dj}$ can be obtained as a function of $\mu_{dj}$ and $\lambda_s$ according to \eqref{theta} and the area proportion $\theta_d$ is defined as a linear combination of $\theta_{dj}$ as in \eqref{benchmark1}, exploiting the hierarchical partition. Note that, alternatively, $\theta_d$ can be expressed in terms of the model expectation $\mathbb{E}\left[\hat{\bar{Y}}_{d}|\mu_{d}, \lambda_a \right]$, obtained adapting \eqref{theta} to the area layer. We remark that this strategy would not preserve the coherence between poverty rates at different levels as opposed to our strategy.
Another advantage of using \eqref{benchmark1} is that the peculiarities related to possible out-of-sample sub-areas are taken into account by incorporating their model predictions relying on available auxiliary information (see Section \ref{post_inference}). In contrast, $\mathbb{E}\left[\hat{\bar{Y}}_{d}|\mu_{d}, \lambda_a \right]$ is informed only by direct estimates and covariates at the area level. This helps mitigating the possible informativeness of the process that leads sub-areas out of the sample. Actually, it can be the effect of the unplanned allocation of sample units within areas or it may depends on sub-area characteristics such as difficult access, remoteness or other.

\subsubsection{Sub-area models}\label{sec:subarea}

The sub-area (SA) model defined with the EB likelihood is defined as: 
\begin{equation}\label{eq:mod_subarea}
    \begin{aligned}
        &\hat{\bar{Y}}_{dj}|\mu_{dj}, \phi_{dj}, \pi_{0dj}, \pi_{1dj} \stackrel{ind}{\sim} EB(\mu_{dj},\phi_{dj}, \pi_{0dj}, \pi_{1dj}),\quad j=1,\dots M_d,\ d=1,\dots,D;\\
        & \text{logit}\left(\mu_{dj}\right)=\alpha_s+\mathbf{x}^T_{dj}\boldsymbol\beta_s+v_{dj}+u_{d}.
    \end{aligned}
\end{equation}
Such a model account for correlation within areas through $u_d$, but it ignores the multiple scales of the problem since $u_d$ is not informed by any quantity defined at the coarser level. Similarly to the S-MS model, the $\theta_{dj}$ can be estimated as a function of $\mu_{dj}$ and $\lambda_s$ according to \eqref{theta}, while $\theta_d$ is defined exploiting the hierarchical partition as in \eqref{benchmark1}.

\subsubsection{Independent multi-scale model}\label{sec:indep}

A naive approach to produce estimates at distinct levels of disaggregation would be fitting a model with two independent layers. In this way, the auxiliary information at a given level is combined to corresponding direct estimates, determining a multi-scale modeling procedure. The independent multi-scale (I-MS) model is defined as follows:
\begin{equation}
    \begin{aligned}\label{eq:mod_ind}
    &\hat{\bar{Y}}_{d}|\mu_{d}, \phi_{d},  \pi_{0d}, \pi_{1d} \stackrel{ind}{\sim} EB(\mu_{d},\phi_{d}, \pi_{0d}, \pi_{1d}),\quad d=1,\dots,D;\\
        & \text{logit}\left(\mu_{d}\right)=\alpha_a+\mathbf{x}^T_{d}\boldsymbol\beta_a+u_{d}\\
        &\hat{\bar{Y}}_{dj}|\mu_{dj}, \phi_{dj}, \pi_{0dj}, \pi_{1dj} \stackrel{ind}{\sim} EB(\mu_{dj},\phi_{dj}, \pi_{0dj}, \pi_{1dj}),\quad j=1,\dots M_d,\ d=1,\dots,D;\\
        & \text{logit}\left(\mu_{dj}\right)=\alpha_s+\mathbf{x}^T_{dj}\boldsymbol\beta_s+v_{dj}.
    \end{aligned}
\end{equation}
The population area proportions are defined as $\theta_d=\mathbb{E}\left[\hat{\bar{Y}}_{d}|\mu_{d}, \lambda_a \right]$ and $\theta_{dj}=\mathbb{E}\left[\hat{\bar{Y}}_{dj}|\mu_{dj}, \lambda_s \right]$ in line with \eqref{theta}. As a consequence, the coherency between rates at multiple levels is not preserved. In addition, the linear predictor related to the finer level in \eqref{eq:mod_ind} does not present an area-specific random effect to model the hierarchical dependence between layers.

\subsection{Prior distributions}

The prior distributions for model parameters are in line with those proposed in \citet{de2022extended}, which can be considered for further details. All three model settings received the same priors for the corresponding parameters. The regularized horseshoe prior \citep{horseshoe} is adopted for regression coefficients in order to automatically incorporate the variable selection step, mimicking the behavior of a spike-and-slab prior.
Both the sets of area and sub-area specific random effects present a variance-gamma shrinkage prior. To complete the model, a uniform prior is specified for the correlation parameter
$\lambda_a | \mu_1, \dots, \mu_D$ $\sim \text{Unif}\left(\max \lbrace 0, \max_{d} (2\mu_{d}-1)/\mu_d\rbrace ; 1 \right)$, and the same for $\lambda_s$ with $\mu_{d}$ replaced by $\mu_{dj}$.

\subsection{Posterior Inference}
\label{post_inference}

Posterior inference has been performed via
Markov Chain Monte Carlo (MCMC) techniques. This has been implemented through the \texttt{Stan} language and the \texttt{rstan} package \citep{carpenter2017stan}.
The estimation has been performed with 4 chains, each including 4,000 iterations, where the first 2,000 has been considered as warm-up iterations and discarded. 
	The expectations of $\theta_{dj}| \mathbf{y}$ and $\theta_{dj}| \mathbf{y}$ are adopted as point estimators, labeled as model-based or hierarchical Bayes (HB) estimators: $
  \hat{\theta}_{d} =\mathbb{E}[\theta_{d}|\mathbf{y}]$ and $\hat{\theta}_{dj}=\mathbb{E}[\theta_{dj}|\mathbf{y}]$.
  Their estimates, together with other posterior summaries such as credible intervals and posterior variances, can be easily approximated exploiting MCMC draws. Note that $\theta_d | \mathbf{y}$, if defined through \eqref{benchmark1}, can be retrieved by combining draws from $\theta_{dj} | \mathbf{y}$. 
The HB estimators under the EB model enjoy the property of design-consistency, namely conditioning on higher level parameters $\hat{\theta}_d \stackrel{p}{\rightarrow} \hat{\bar{Y}}_d$ and $\hat{\theta}_{dj} \stackrel{p}{\rightarrow} \hat{\bar{Y}}_{dj}$ \citep[]{de2022extended}.

   As regards out-of-sample sub-areas, the prediction is carried out by considering the functional
	$
\theta_{dj}^{OOS}=\mu_{dj}=\text{logit}^{-1}\left(\mathbf{x}_{dj}^T\boldsymbol{\beta}_s+v_{dj}+u_d\right),
	$
 when auxiliary information is available.
	Samples from $\theta_{dj}^{OOS}|\mathbf{y}$ are obtained combining draws from  $\boldsymbol{\beta}_s|\mathbf{y}$ and $u_d|\mathbf{y}$. As $v_{dj}$ constitutes a random effect from an unobserved sub-area, we propagate the uncertainty by drawing samples from the variance-gamma shrinkage priors \citep[][]{de2022extended}.

\section{The Bayesian benchmarking proposal}
\label{sec:benchmarking}

The definition of area level estimates through \eqref{benchmark1} preserves 
the coherency of poverty rates between the two levels. However, the coherency with respect to higher levels of aggregation (e.g., national) is not guaranteed. Indeed, the poverty rate at higher levels may be known in population or reliably estimated through surveys; such value, hereafter called benchmark, might be used to constrain the linear combination of sub-area estimates.
In this section, we consider the problem of finding a set of constrained estimators $\tilde{\boldsymbol\theta}=(\tilde{\theta}_{11}, \dots \tilde{\theta}_{DM_d})^T$, for the target parameters $\boldsymbol\theta=(\theta_{11},  \dots \theta_{DM_d})^T$. The linear constraint is imposed by the poverty rate at the finest spatial scales and it is defined by the combination of both equations in \eqref{benchmark1} as
\begin{eqnarray} \label{bmk}
  \sum_{d=1}^D \sum_{j=1}^{M_d}  q_{dj}\tilde{\theta}_{dj}&=&  t,
\end{eqnarray}
where $q_{dj}=s_{dj}s_d$ denotes the population share of sub-area $dj$ at national level. Note that $\sum_d\sum_j q_{dj}=1$, and $t$ is, in our case, the poverty rate at the national level.

The problem of benchmarking has been tackled from different perspectives in the Bayesian literature. The most popular strategy follows a decision-theoretic framework: starting from a loss function $L(\boldsymbol\theta, \tilde{\boldsymbol\theta})$, the benchmarked estimators are obtained by minimizing the posterior risk $\mathbb{E}[L(\boldsymbol\theta, \tilde{\boldsymbol\theta})|\mathbf{y}]$ under linear constraint \citep{datta2011bayesian}. Despite its simple implementation, this method is applied to point estimators obtaining only a set of benchmarked point estimators. Thus, it represents a non-fully Bayesian strategy, not delivering posterior measures of uncertainty.
To solve this problem, fully Bayesian methods have been developed. Among the others, \citet{zhang2020fully} include the constraint through a suitable prior distribution; 
\citet{janicki2017benchmarking} search for a constrained joint posterior distribution with minimum Kullback-Leibler distance to the unconstrained one, whereas \citet{okonek2022computationally} opt for an importance sampling approach.

Our approach relies on the concept of posterior projection \citep{dunson2003bayesian, sen2018constrained} that was already considered by \citet{patra2019constrained} within the SAE framework.
In this case, the posterior samples from $\theta_{dj} | \mathbf{y}$ are projected into the feasible set
defined by the benchmarking constraint, minimizing the distance from the original ones. Specifically, this distance is defined by means of a loss function. In contrast to other fully Bayesian methods, such an approach is more computationally efficient, in particular when the fitted model is not trivial as in our case.  
The choice of a suitable loss function for our inferential problem is discussed in Section \ref{subsec::choice_loss} and the main results are illustrated in Section \ref{subsec::projection}.

\subsection{Choice of the loss function}
\label{subsec::choice_loss}

The mapping required for posterior projection can be represented by a loss function $L(\boldsymbol\theta, \tilde{\boldsymbol\theta})$.
Defining with $\psi_{dj}$ a generic weight, the most common loss function is the weighted quadratic one, i.e. $L(\boldsymbol\theta, \tilde{\boldsymbol\theta})=\sum_{d=1}^D \sum_{j=1}^{M_d} \psi_{dj} ({\theta}_{dj}-\tilde{\theta}_{dj})^2$.
This function is suitable when each $\theta_{dj}$ is defined on the real line but not appropriate when it is bounded, as in our case where $\boldsymbol\theta \in \Theta = (0,1)^M$. Coherently, the constrained parameter space has to be bounded too, being defined as
$\Tilde{\Theta}=\{\tilde{\boldsymbol\theta} \in (0,1)^M| \sum_d \sum_j q_{dj}\tilde{\theta}_{dj} = t\}$. 
A similar issue is considered by \citet{ghosh2015benchmarked} which propose a generalized Kullback-Leibler loss function that restricts the parameter space to positive-defined values and avoids estimates falling below 0. A possible adaptation to the case in $(0,1)$ has been proposed by \citet{aitchison1992criteria} contemplating the simple logit transformation, i.e. $\sum_{j=1}^{M_d} \psi_{dj} (\text{logit}({\theta}_{dj})-\text{logit}(\tilde\theta_{dj}))^2$. Such an option, while being intuitively simple, leads to a non-convex function. 

We propose a weighted loss function pertaining to the Bregman family \citep{banerjee2005optimality} that suitably restricts both $\boldsymbol\theta$ and $\tilde{\boldsymbol\theta}$ to lie on the unit interval, defined as follows
\begin{equation}\label{eq::bregman}
L(\boldsymbol\theta, \tilde{\boldsymbol\theta})= \sum_{d=1}^D \sum_{j=1}^{M_d} \psi_{dj} \bigg[ \theta_{dj}\log \bigg( \frac{\theta_{dj}}{\tilde{\theta}_{dj}} \bigg)+(1-\theta_{dj})\log \bigg( \frac{1-\theta_{dj}}{1-\tilde{\theta}_{dj}} \bigg)\bigg].
\end{equation}
In this way, we target all the solutions in $\Tilde{\Theta}$ which minimize $L(\boldsymbol\theta, \tilde{\boldsymbol\theta})|\mathbf{y}$. 
The class of Bregman loss functions includes a number of functions defined over different domains, including the quadratic and the Kullback-Leibler loss functions considered by \citet{ghosh2015benchmarked}. 
Such a class of functions is appealing since it ensures that the posterior risk $\mathbb{E}[L(\boldsymbol\theta, \tilde{\boldsymbol\theta})| \mathbf{y}] $ is minimized by the posterior expectation $\mathbb{E}[\boldsymbol{\theta}|\mathbf{y}]$, a popular Bayes estimator in the small area literature. 
Figure S1 in the Supplementary Material displays the Bregman loss function in comparison with the quadratic loss function. Note that the proposed function is strictly defined in $(0,1)$, being symmetric only when $\theta=0.50$.

To fully define a weighted loss function, its weights must be specified by the user. Popular choices for $\psi_{dj}$ are generally $q_{dj}$, $q_{dj}/\hat{\theta}_{dj}$, or an inversely proportional function of the sampling variances and/or the posterior variances of target parameters \citep[Ch. 6]{rao2015small}.
In our case, we opt for the conservative option $\psi_{dj}=q_{dj}$, so that all estimates are adjusted depending only on their distance to the boundaries of the support.

\subsection{Benchmarked posterior projection}
\label{subsec::projection}

    The benchmarking strategy relying on the posterior projection approach is based on the idea of drawing samples from the unconstrained posterior $p(\boldsymbol{\theta}|\mathbf{y})$, defined on $\Theta$, and projecting them on the constrained parameter space $\tilde{\Theta}$, inducing a mapping function $f:\Theta \rightarrow{} \Tilde{\Theta}$. The following theorem determines the projection mapping that must be applied to the unconstrained estimators.

\begin{theorem}
\label{theorem1}

The projection of $\boldsymbol{\theta}|\mathbf{y}$ on $\tilde\Theta\subseteq(0,1)^M$ induced by the minimum distance mapping based on \eqref{eq::bregman} is a random variable denoted with $\boldsymbol{\tilde\theta}|\mathbf{y}$. The induced transformation is unique and given by:
\begin{equation}  
\tilde{\theta}_{dj} | \mathbf{y}=\frac{1}{2}+\frac{\psi_{dj}}{2\gamma q_{dj}} \left\{  \sqrt{\bigg(1-\gamma \frac{q_{dj}}{\psi_{dj}}\bigg)^2+4\theta_{dj} \gamma\frac{ q_{dj}}{\psi_{dj}}} -1 \right\} \bigg| \mathbf{y}
\label{eq::breg1}
\end{equation}
where $\gamma$ is given as the solution of the non-linear equation
\begin{equation}  
\frac{1}{2\gamma}  \left[  \sum_{d=1}^D \sum_{j=1}^{M_d} \psi_{dj}  \left\{ \sqrt{ \bigg( 1-\frac{\gamma q_{dj}}{\psi_{dj}}\bigg)^2 +4\gamma\theta_{dj} \frac{ q_{dj}}{\psi_{dj}}} -  1 \right\} + \gamma q_{dj} \right]= t.
\label{eq::breg2}
\end{equation}
\end{theorem}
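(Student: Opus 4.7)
The plan is to carry out the projection realization by realization: for each posterior draw $\boldsymbol\theta\in(0,1)^M$, minimize the loss $L(\boldsymbol\theta,\tilde{\boldsymbol\theta})$ in \eqref{eq::bregman} over $\tilde{\boldsymbol\theta}\in\tilde\Theta$. First I would verify that this is a strictly convex program. The loss is separable in the components of $\tilde{\boldsymbol\theta}$, and each summand coincides with $\psi_{dj}$ times the Kullback--Leibler divergence from $\mathrm{Bernoulli}(\theta_{dj})$ to $\mathrm{Bernoulli}(\tilde\theta_{dj})$; its second derivative with respect to $\tilde\theta_{dj}$ equals $\psi_{dj}\bigl[\theta_{dj}/\tilde\theta_{dj}^{2}+(1-\theta_{dj})/(1-\tilde\theta_{dj})^{2}\bigr]>0$ on the open hypercube. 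Combined with the affine equality constraint $\sum q_{dj}\tilde\theta_{dj}=t$, strict convexity yields a unique minimizer in $(0,1)^M$, so the mapping $f:\Theta\to\tilde\Theta$ is well-defined.

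Next I would attach a single Lagrange multiplier $\gamma$ to the benchmark constraint and impose stationarity. The KKT conditions decouple across $(d,j)$ and, after combining fractions, reduce to
\begin{equation*}
\frac{\tilde\theta_{dj}-\theta_{dj}}{\tilde\theta_{dj}(1-\tilde\theta_{dj})}=\frac{\gamma\, q_{dj}}{\psi_{dj}}.
\end{equation*}
Writing $c_{dj}=\gamma q_{dj}/\psi_{dj}$ and clearing denominators produces the quadratic $c_{dj}\tilde\theta_{dj}^{2}+(1-c_{dj})\tilde\theta_{dj}-\theta_{dj}=0$. Solving it and retaining the root that lies in $(0,1)$---checked at the boundary cases $\theta_{dj}\to 0,1$ and in the limit $c_{dj}\to 0$, where the expression continuously recovers $\tilde\theta_{dj}=\theta_{dj}$---yields the explicit projection \eqref{eq::breg1}.

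The last step is to substitute \eqref{eq::breg1} into the constraint $\sum_{d,j}q_{dj}\tilde\theta_{dj}=t$. Since $\sum q_{dj}=1$, the constant contribution from the $\tfrac12$ term collapses to $\tfrac12$, and multiplying through by $2\gamma$ with the $\gamma q_{dj}$ terms absorbed inside the sum reproduces \eqref{eq::breg2} verbatim. Existence and uniqueness of the scalar $\gamma$ solving \eqref{eq::breg2} then follow from the uniqueness of the minimizer established in the first paragraph. I expect the main obstacle to be the root-selection step: the multiplier $\gamma$ may take either sign depending on whether the unconstrained draw under- or overshoots the benchmark, so one has to confirm that the $+\sqrt{\,\cdot\,}$ branch remains in $(0,1)$ for both signs of $c_{dj}$. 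I would handle this via the algebraic bound $\sqrt{(1-c_{dj})^{2}+4c_{dj}\theta_{dj}}\in(|1-c_{dj}|,1+c_{dj})$ when $c_{dj}>0$ and the symmetric bound when $c_{dj}<0$, combined with the boundary checks above, which together rule out the spurious negative root across the entire relevant range of $\gamma$.
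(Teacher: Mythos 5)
Your proposal is correct and follows what is essentially the paper's own route: the loss is a separable, strictly convex Bregman (Bernoulli Kullback--Leibler) divergence whose Lagrangian stationarity condition decouples into the per-component quadratic $c_{dj}\tilde\theta_{dj}^{2}+(1-c_{dj})\tilde\theta_{dj}-\theta_{dj}=0$ with $c_{dj}=\gamma q_{dj}/\psi_{dj}$, the root in $(0,1)$ gives \eqref{eq::breg1}, and back-substitution into the benchmark constraint yields \eqref{eq::breg2}. Your root-selection check (sign analysis of the quadratic at $0$ and $1$ for both signs of $c_{dj}$, plus the removable singularity at $\gamma=0$) and the convexity argument for uniqueness are exactly the points that need care, and you handle them adequately.
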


\begin{proof} See the Supplementary Material.
\end{proof}

When inference is carried out via MCMC methods, drawns from $p(\boldsymbol{\tilde\theta}|\mathbf{y})$, i.e. the projected posterior,
can be easily obtained by applying the transformation defined in Theorem \ref{theorem1} to posterior samples from 
$p(\boldsymbol\theta|\mathbf{y})$. The related posterior summaries can be computed as described in Section \ref{post_inference}, with point estimators denoted with $\hat{\tilde{\theta}}_{d}$ and $\hat{\tilde{\theta}}_{dj}$.

As mentioned before, non-fully benchmarking methods target the constrained minimization of a posterior risk, i.e. $\mathbb{E}[L(\boldsymbol\theta, \tilde{\boldsymbol\theta})|\mathbf{y}]$.
Note that our projection approach targets the constrained minimization of $L(\boldsymbol\theta, \tilde{\boldsymbol\theta})|\mathbf{y}$, instead. 
One could argue that the conceptual approach is distinct and may lead to different results with respect to the standard posterior risk minimization, inducing non-comparability issues. Due to the linearity property of the derivative of the Lagrangian function with respect to $\theta_{dj}$, the constrained minimization of the posterior risk leads to a solution that is identical to \eqref{eq::breg1} and \eqref{eq::breg2} but with $\theta_{dj}|\mathbf{y}$ replaced by $\mathbb{E}[\theta_{dj} | \mathbf{y}]$. Thus, Jensen's inequality explains the discrepancy between the two results.
When $\psi_{dj}=q_{dj}$, the mapping function $f:\Theta \rightarrow{} \Tilde{\Theta}$ may be either slightly concave or slightly convex depending on the sign and magnitude of $t-\sum_{d}\sum_{j} q_{dj}\theta_{dj}$
as visually illustrated in Figure S2 of the Supplementary Material:
in this case, the discrepancy is negligible. In contrast, if $\psi_{dj} \neq q_{dj}$, the mapping function may be highly non-linear and results could be incomparable.

\section{Simulation}\label{sec:simulation}
In this section, we study how statistical models of Section \ref{sec:models} deal with a multi-scale data problem by evaluating the frequentist properties of resulting estimators. To this aim, we reproduce a framework in which the goal is making inference on a proportion and, hence, a dichotomous response is generated for the whole synthetic population. We opt for a design-based simulation study, being able to reproduce the peculiarities of a multi-scale problem within a survey design setting. To this aim, we consider two aggregation levels, i.e., areas and sub-areas, and two different scenarios that contemplate diverse ratios of between and within area variability.

For each scenario, a population of $N=180,000$ individuals grouped into $3,600$ clusters is generated, mimicking the two-stages structure characterizing the scheme of several common surveys.
Units are divided into $D=30$ areas and $M=150$ sub-areas, i.e., each area has $M_d=5$ sub-areas. The whole procedure for generating the population and sample drawing is fully detailed in Section S3 of the Supplementary Material. A continuous variable is generated from a three-fold Gaussian model with cluster effect scale $\sigma_c=0.2$, area effect scale $\sigma_a$ and sub-area effect scale $\sigma_s$. In scenario 1, the sub-area variation ($\sigma_s = 0.13$) prevails on the area variation ($\sigma_a = 0.08$), inducing a marked heterogeneity within the areas. In scenario 2, the area variation ($\sigma_a = 0.13$) prevails on sub-area variation ($\sigma_s=0.08$), leading to homogeneous areas. 
Such values for the scale parameters are retrieved from the variance decomposition related to real data discussed in Section \ref{sec:application}.
Lastly, the dichotomous response is obtained by assigning value 1 to individuals below the first quintile of the generated continuous variable, and 0 otherwise.

\begin{figure}
    \centering
    \includegraphics[width = \linewidth]{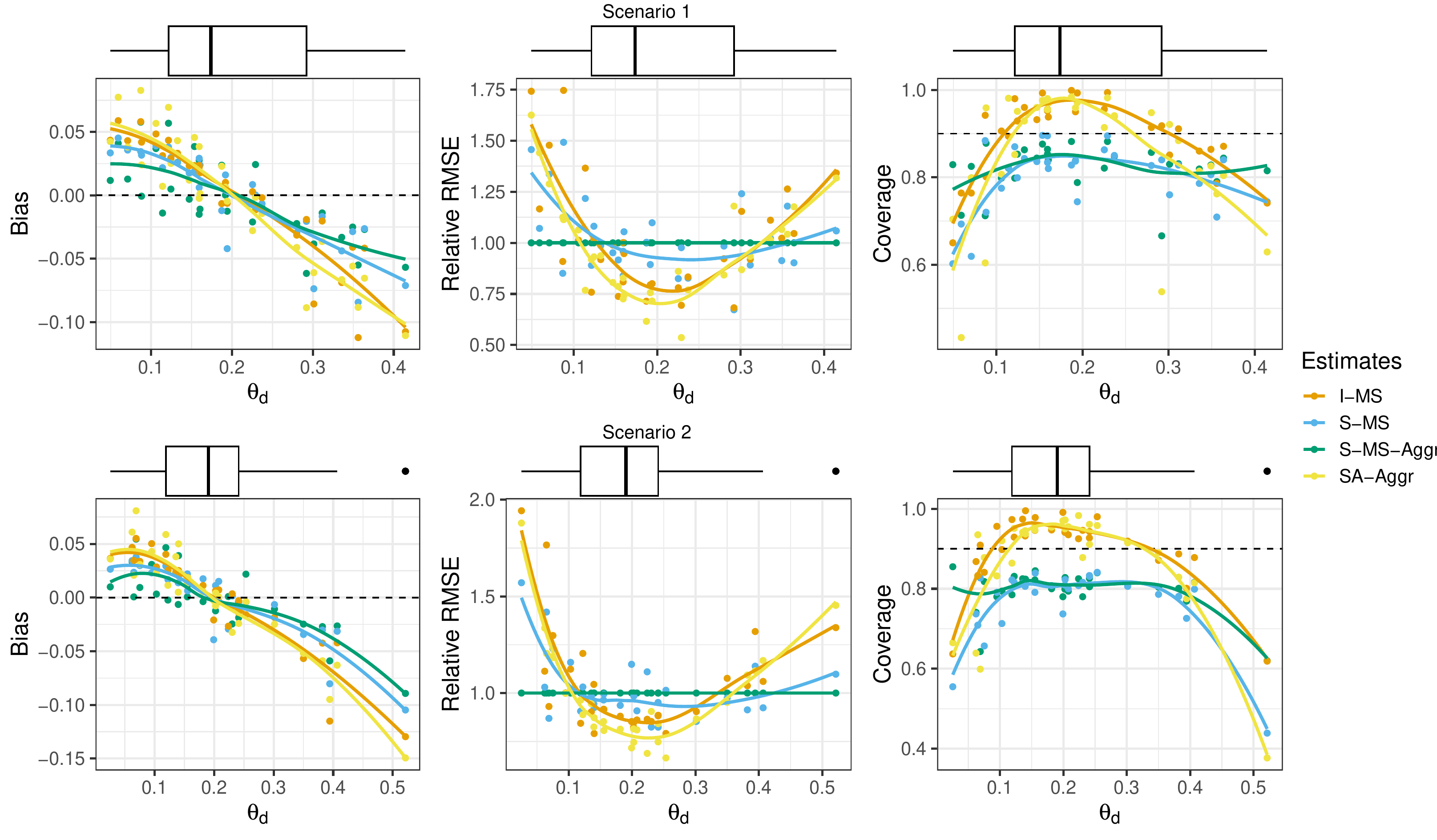}
    \caption{Ensamble distribution of biases, relative RMSEs (i.e. divided by RMSEs of S-MS-Aggr) and frequentist coverages with respect to the population poverty rate $\theta_d$.}
    \label{fig:ris_sim}
\end{figure}

From each synthetic population, $B=1,000$ samples are drawn following a two-stages sampling procedure with rates of approximately 2\%. To account for the presence of out-of-sample sub-areas, each drawn sample does not include units from 37 sub-areas belonging to 15 areas. In this way, half of the areas include individuals from all the sub-areas, and, in the other half, units pertaining to two or three sub-areas are not observed. 

\begin{table}[]
\centering
\begin{tabular}{@{}lllllllll@{}}
\toprule
                   &           & LOOIC   & \multicolumn{3}{c}{MAPE (\%)}    & \multicolumn{3}{c}{ARRMSE}   \\ \cmidrule(l){3-3} \cmidrule(l){4-6} \cmidrule(l){7-9} 
Scen.           & Estimates     & Overall & No OOS & OOS & Overall & No OOS & OOS & Overall \\ \midrule
\multirow{4}{*}{1} & I-MS      & -17.3   & 33.7  & 45.5    & 39.6   & 0.378  & 0.440    & 0.409   \\
                   & S-MS      & -29.5   & 28.2  & 36.4    & 32.3   & 0.351  & 0.465    & 0.408   \\
                   & S-MS-Aggr & -       & 25.8  & 36.2    & 31.0   & 0.323  & 0.468    & 0.395   \\
                   & SA-Aggr   & -       & 27.8  & 40.3    & 34.0   & 0.334  & 0.463    & 0.398   \\ \midrule
\multirow{4}{*}{2} & I-MS      & -17.1   & 50.1  & 50.5    & 50.3   & 0.426  & 0.454    & 0.440   \\
                   & S-MS      & -33.5   & 30.5  & 36.9    & 33.7   & 0.386  & 0.472    & 0.429   \\
                   & S-MS-Aggr & -       & 27.4  & 36.9    & 32.1   & 0.347  & 0.478    & 0.413   \\
                   & SA-Aggr   & -       & 31.6  & 39.4    & 35.5   & 0.386  & 0.473    & 0.429   \\ \bottomrule
\end{tabular}
\vspace{2mm}
\caption{LOOIC, MAPE and ARRMSE for the area level. Concerning MAPE and ARRMSE, values obtained for the areas without missing sub-areas (No OOS), with missing sub-areas (OOS) and overall are distinctly reported.}
\label{tab:sim_area}
\end{table}

The frequentist properties of the benchmarked estimators $\hat{\tilde{\theta}}_{dj}$ and $\hat{\tilde{\theta}}_d$ obtained under different models are compared: S-MS, SA and I-MS. Note that, with the suffix -Aggr we denote area-level estimates defined as in \eqref{benchmark1}.
Let us indicate with $\theta_{i}$ as the true value for the generic area or sub-area $i$ and $\hat{\tilde{\theta}}_{i}^{(b)}$ as the benchmarked estimate at iteration $b$. To determine the 90\% credible intervals we consider 5$^{th}$ and $95^{th}$ posterior percentiles, labeled as $\tilde{\theta}_{i,L}^{(b)}$ and $\tilde{\theta}_{i,U}^{(b)}$. We define bias, root mean squared error (RMSE) and coverage for 90\% credible intervals for each area or sub-area as
\begin{equation*}
    \begin{aligned}
&\text{Bias}\left[\hat{\tilde\theta}_{i}\right]=\frac{1}{B}\sum_{b=1}^B\left(\hat{\tilde{\theta}}_{i}^{(b)}-\theta_{i}\right),\quad \text{RMSE}\left[\hat{\tilde\theta}_{i}\right]=\sqrt{\frac{1}{B}\sum_{b=1}^B\left(\hat{\tilde{\theta}}_{i}^{(b)}-\theta_{i}\right)^2},\\
&\text{Cov}\left[\tilde\theta_{i}\right]=\frac{1}{B}\sum_{b=1}^B\boldsymbol{1}\left\{\theta_{i}\in\left[\tilde{\theta}_{i,L}^{(b)};\tilde{\theta}_{i,U}^{(b)}\right]\right\}.
    \end{aligned}
\end{equation*}

 As summary measures of bias and RMSE, we define the mean absolute percentage mean squared error (MAPE) and the average relative root mean squared error (ARRMSE) as
\begin{equation*}
    \text{MAPE (\%)}=\frac{100}{BI}\sum_{b=1}^B\sum_{i=1}^I\frac{\hat{\tilde{\theta}}_{i}^{(b)}-\theta_{i}}{\theta_{i}}, \quad \text{ARRMSE}=\frac{1}{I}\sum_{i=1}^I\frac{\text{RMSE}\left[\hat{\tilde\theta}_{i}\right]}{\theta_{i}}.
\end{equation*}
Lastly, as a measure of the model goodness-of-fit, the leave-one-out information criterion \citep[LOOIC,][]{vehtari2017practical} is considered.

Interesting cues about the performances of the compared area-specific estimators can be deduced from Figure \ref{fig:ris_sim}. It reports the relative RMSEs, obtained by dividing each RMSE by the corresponding S-MS-Aggr RMSE, jointly with biases and frequentist coverages versus the population values. In both scenarios, the S-MS-Aggr estimators are characterized by lower biases, if compared to I-MS and SA cases: such differences become relevant for areas whose true values are far from their average (about 50\% of them). Such a decrease in the bias leads to some differences also in the behaviour of RMSE too: S-MS-Aggr produces more efficient estimators in the tails of the \textit{ensamble} distribution of population poverty rates $\{\theta_1,\dots,\theta_D\}$, becoming less efficient in the central part of the distribution. By focusing on the 90\% credible intervals, we observe that S-MS-Aggr is affected by a slight under-coverage (median: 0.83 in scenario 1 and 0.81 in scenario 2). However, the performances deteriorate more slowly if compared to the other methods in the non-central parts of the ensamble distribution. Table \ref{tab:sim_area} summarises the overall performances of the methods across areas. The MAPE points out the ability of S-MS-Aggr in reducing the bias, registering lower values both in areas with and without out-of-sample sub-areas; conversely, ARRMSE highlights that the average performances are quite similar.

We can point out that the I-MS model, characterized by the absence of an area-specific random effect, leads to poorer performances if compared to all the other strategies. This is more evident in Scenario 2 where the data-generating process mimics the case $\sigma_a>\sigma_s$. 
Targeting the estimation of sub-area parameters, summaries reported in Table S1 of the Supplementary Material point out that the three models behave quite similarly, with the exception of I-MS, which still shows lower performances in Scenario 2. 
Lastly, we highlight that the average LOOIC of the S-MS model is widely lower both for sub-area 
(Table S1) and area (Table \ref{tab:sim_area}) levels in both scenarios, denoting a remarkable improvement in the predictive abilities of this model.  

In summary, the estimators under the S-MS model have more stable performances along the support. They are characterized by a more moderate shrinkage and, for this reason, they perform better on the tails of the poverty rates ensemble distribution, and slightly better than their competitors on average. Small and, especially, large poverty rates are relevant for the implementation of geographically targeted alleviation policies as they can identify poverty hotspots. Therefore, an improvement in the estimation of such values may be of great importance in this context.

\section{Poverty mapping in Bangladesh}\label{sec:application}

The aim of this section is to map poverty in Bangladesh at multiple levels of administrative divisions: the Administrative Level AL-2, comprising $D=64$ zilas as areas, and the AL-3 with $M=544$ upazilas as sub-areas. 
The target poverty measure is the proportion of people below the 20th percentile of the WI national distribution. We consider DHS survey data complemented with RS and geographical data available from multiple open sources. Data and sources are shortly illustrated in Section \ref{data_survey}, whereas results are discussed in Section \ref{sec:results}.

\subsection{Data}
\label{data_survey}

The Bangladesh DHS, of which we consider the 2014 wave, targets the population residing in non-institutional dwelling units by providing a two-stage stratified sample with 17,300 households and $n=81,624$ individuals overall.
The sample sizes in terms of individuals range from 135 to 4,730 (median 986) for zilas and from 16 to 1,884 (median: 160) for upazilas. All the zilas are in-sample, whereas about one third of upazilas (179) are not included in the sample. 

For each household, a WI score is computed by combining a set of answers on the availability of durable assets and housing welfare characteristics \citep{fabic2012systematic}. Households laying in the first quintile of the national distribution of WI are labeled as \textit{poor}. Since poverty is usually investigated individually, the analysis is carried out at the individual level by assuming that all components of the same household share the same WI score.
A sampling weight is associated to each household, accounting for unequal probabilities of selection and non-responses.
The survey estimates, as defined in Section \ref{sec:notation}, range from 0 to 0.53 (median: 0.21) with just one 0 value for zilas; while they span from 0 to 0.96 (median: 0.16) with 66 zero values for upazilas. 

To retrieve the effective sample sizes, we opt to estimate the design effect starting from the strata level as in \citet{schmid2017constructing}.
Such uncertainty estimates confirm the unreliability of survey estimates both at zila and upazila levels motivating us to employ SAE: we observe coefficients of variation higher than 0.20 in more than 80\% of overall in-sample estimates. We note that effective sample sizes are much lower than the effective ones because the poverty status is defined at the household level and the correlation within clusters (i.e. Census enumeration areas) is very strong.

\begin{table}[]
\centering
\begin{tabular}{@{}rrrr@{}}
\toprule
& \multicolumn{3}{c}{LOOIC (Standard Error)}       \\ 
Models     & SA & I-MS & S-MS\\
\toprule
Zila level &     - \hspace{4mm} (-)&   -137.9        (12.5) &    -150.9     (10.9)\\ 
Upazila level &    -101.8      (30.4)  &       -98.4       (31.2) &     -115.9     (29.5)  \\ \bottomrule
\end{tabular}
\vspace{2mm}
\caption{LOOIC values and their standard errors for all the considered models.}
\label{tab::looic}
\end{table}
  
We integrate $46$ RS variables as covariates computed both at the area and sub-area levels by cutting rasters through shapefiles \citep[for more details, see][and Section S4 in the Supplementary Material]{de2022extended}. Specifically, we consider the demographic composition of areas and sub-areas by including the population density and its disaggregation by age and sex classes. The nighttime light radiance and the distances to main facilities and infrastructures (e.g. nearest city or healthcare site) are included as indicators of area urbanization and economic development. 
As the agricultural sector is a driving force for Bangladeshi economy, its territorial and climatic characterization may be proxies of productivity.
The first one has been considered by including land-use variables that specify the distance from nearest areas with specific use classification (e.g. cultivated, woody-tree, artificial surface), the elevation above sea level and topographic slope.
The second one is fulfilled through bio-climatic variables such as the overall and seasonal variations of temperature and rainfall (e.g., annual mean, standard deviation and temperature diurnal range).

\subsection{Results}\label{sec:results}

The alternative models SA, I-MS and S-MS are fitted on Bangladeshi data. Table \ref{tab::looic} provides details about the model comparison in terms of LOOIC, for both spatial levels. According to it, the S-MS model shows better goodness-of-fit with respect to the other proposals, while SA and I-MS perform similarly.
The main motivation for this finding is that the S-MS model has the area-specific random effect $u_d$, $\forall d$, characterizing the distributions at both levels, thus promoting an exchange of information between areas and sub-areas. Conversely, the other models define $u_d$ at only one level, thereby being informed only by observations at the corresponding level. This translates into a better ability of S-MS model to capture specific area features through random effects. To elaborate more on this point, the ensemble distributions of $\mathbb{E}[u_d|\mathbf{y}]$ are depicted in the left panel of Figure~\ref{fig:dens_reff}: note that under SA and I-MS models such distributions are markedly more concentrated around zero, not capturing area-specific peculiarities. Looking at the ensemble distribution of $\mathbb{E}[v_{dj}|\mathbf{y}]$, shown in the right panel of Figure~\ref{fig:dens_reff}, we observe that upazila effects seem partially absorbed by zila effects for the S-MS model. However, no relevant differences among models are recorded in this case. 

\begin{figure}
    \centering
    \includegraphics[width = .8\linewidth]{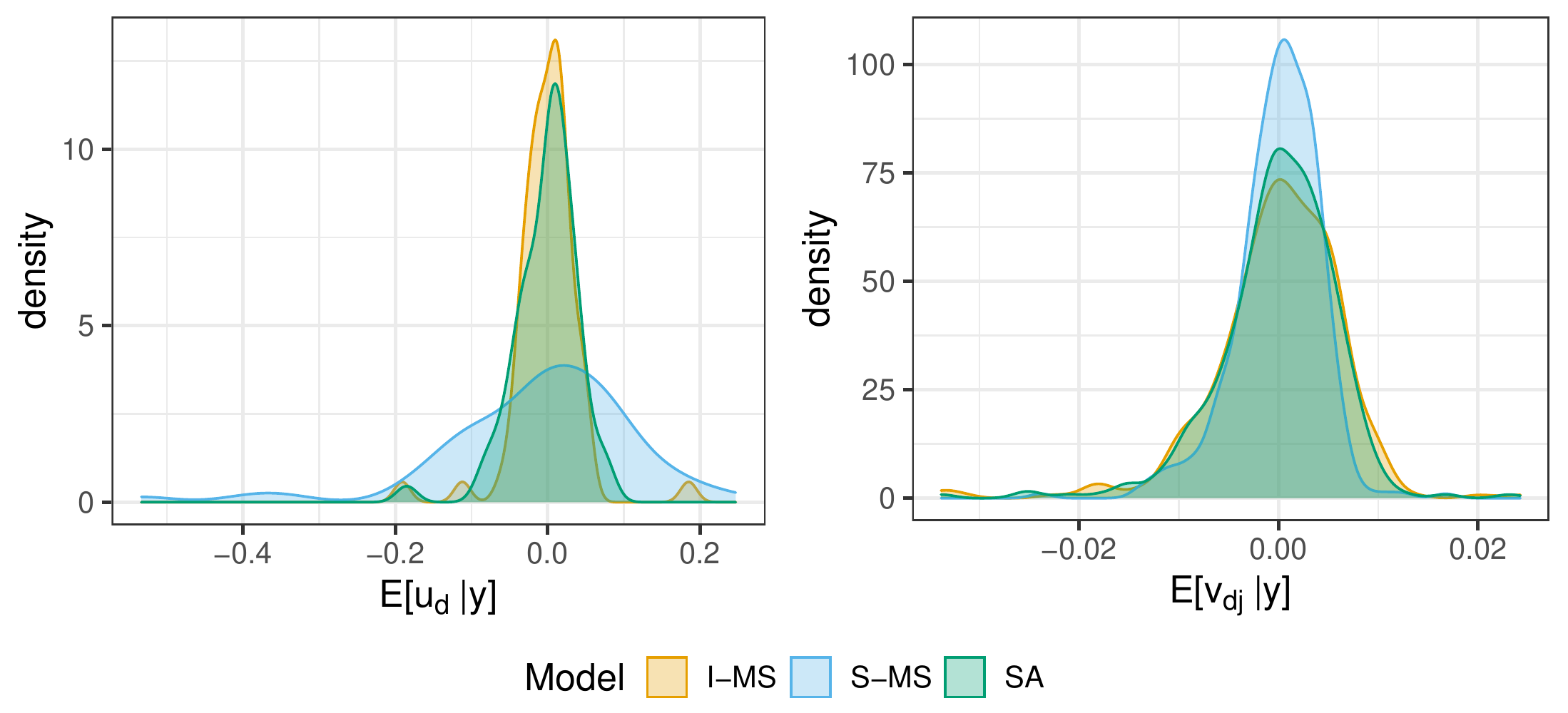}
    \caption{Densities of posterior means of  $u_d$ and $v_{dj}$, $\forall d$, $\forall j$, for each model.} 
    \label{fig:dens_reff}
\end{figure}

To simplify discussion, from now on, we focus on S-MS model estimates due to the aforementioned reasons. 
To obtain benchmarked estimates with respect to the national level, we project the posterior distributions of $\theta_{dj}$, $\forall d$, $\forall j$ exploiting the result of Theorem 1. Exploiting the definition of the response, we set $t=0.2$.
We compare the original posterior distributions with their projections under the proposed Bregman and weighted squared error loss functions. In particular, Figure \ref{fig:bench_sforo} shows it for two upazilas with poverty rates close to zero. Note that, under the weighted squared error loss, projections can be markedly outside the support taking negative values. In contrast, under the Bregman loss function, benchmarked posteriors are correctly defined in $(0,1)$. 

\begin{figure}
    \centering
    \includegraphics[width = .8\linewidth]{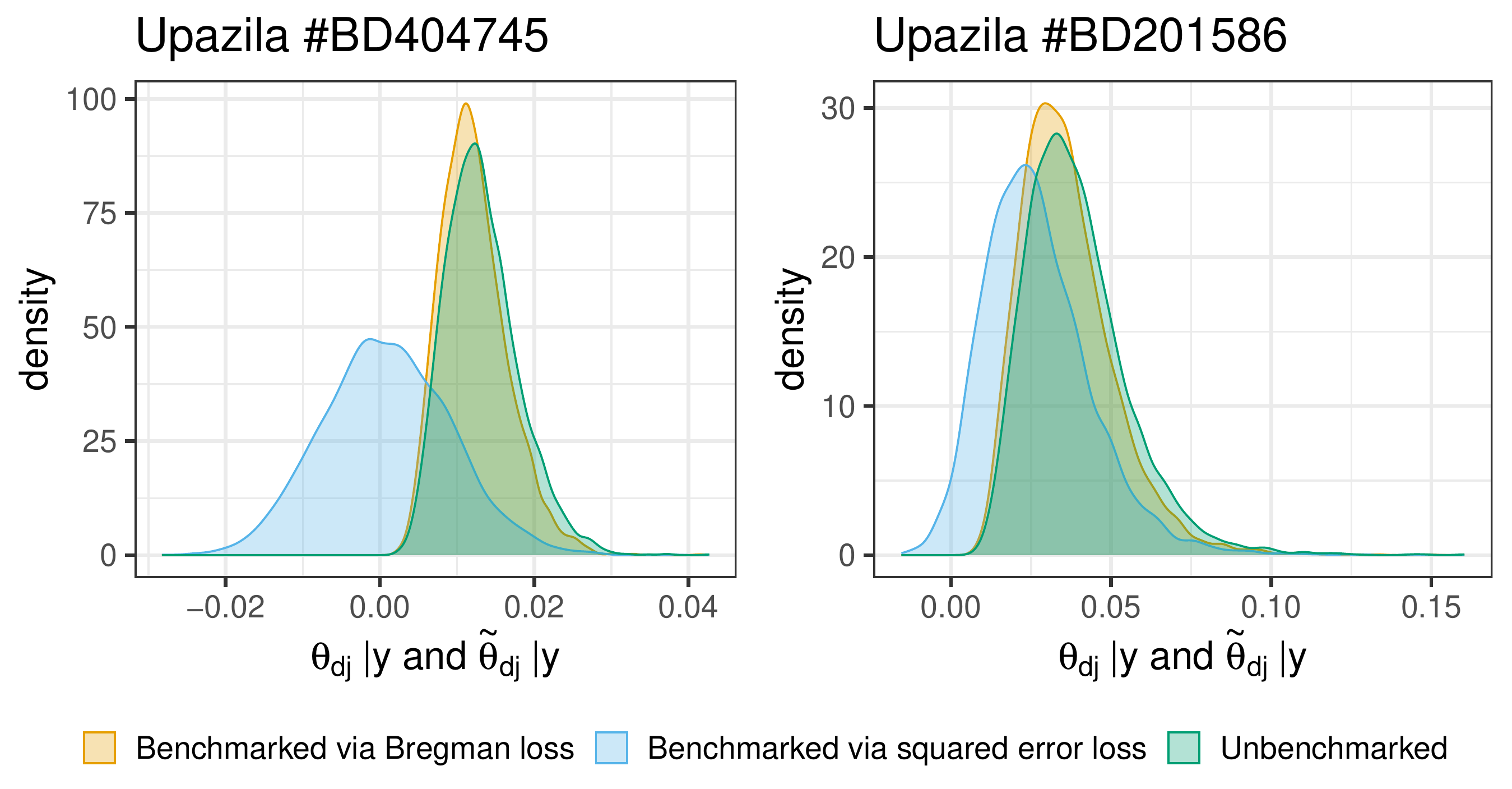}
    \caption{Posterior distributions of benchmarked and unbenchmarked estimates in two upazilas.}
    \label{fig:bench_sforo}
\end{figure}

The shrinking process induced by the model at both spatial levels is depicted in Figure \ref{fig:shrink}, by comparing survey estimates with model-based estimates. As expected, the shrinkage is stronger at the upazila level, given the downward precision of survey estimates. 
The differences between survey and model-based estimates may be high in case of very low effective sample sizes. Moreover, the right panel of Figure \ref{fig:shrink} shows that discrepancies at the area level may be also due to a high percentage of out-of-sample
sub-areas. In Figure \ref{fig:oos}, ensemble densities of sub-area estimates split up between in-sample and out-of-sample ones, together with a collection of the most relevant covariates. This illustrates that the composition of out-of-sample sub-areas is quite heterogeneous, incorporating both remote sub-areas (e.g. a large number of sub-areas on Chittagong Hill tracts) and urban or suburban ones. As a consequence, the out-of-sample poverty rates show a higher polarization between a set of urban and less poor sub-areas and a set of very poor rural ones. The latter is characterized by greater poverty levels than those fitted for in-sample sub-areas. This confirms that the sample selection process determining out-of-sample sub-areas can be affected by specific features.

\begin{figure}
    \centering
    \includegraphics[width = .8\linewidth]{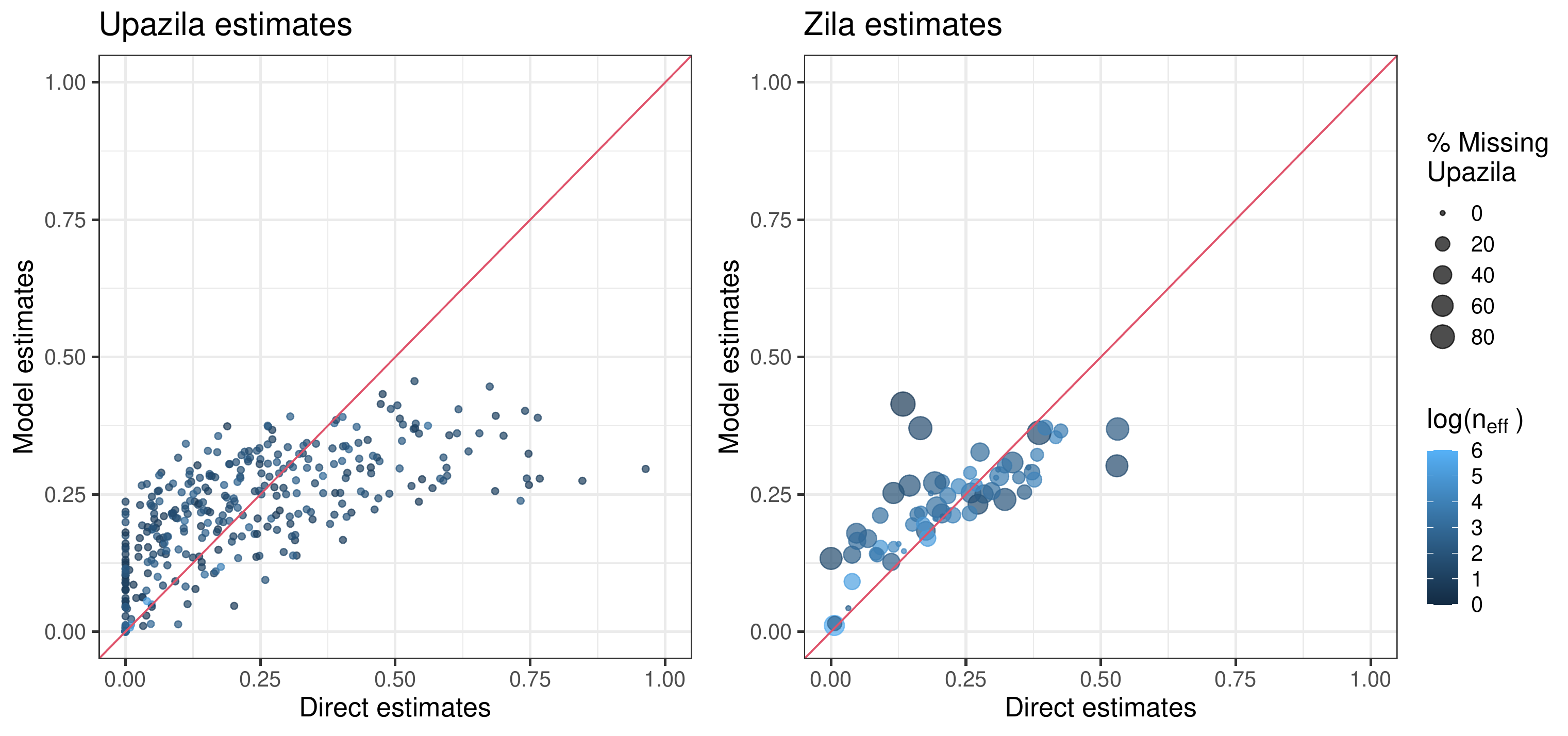}
    \caption{Shrinkage process at both spatial levels, bisector line in red. }
    \label{fig:shrink}
\end{figure}

The spatial distribution of model-based estimates is displayed in Figure \ref{fig:mapss}, enabling us to highlight the heterogeneity of poverty rates inside each zila. This is not clear by the zila map that averages out such estimates; whereas, at a finer level, the upazila map bears additional and valuable information.
The reduction of standard deviation of model-based estimates with respect to survey-based ones is of $51\%$ on average for zilas 
and $77\%$ for upazilas.

At a first glance, the poverty patterns are consistent with the literature \citep{kam2005spatial, imam2019small}, capturing consolidated dynamics. The metropolitan regions of Dhaka and Chittagong retain the lowest poverty levels, in contrast with peripheral areas. High poverty incidence domains overlap with ecologically poor zones \citep{kam2005spatial}. We can mention the territories more severely exposed to climate change effects and floodings such as the Haor depression (Sylhet basin lowlands) in the north-east, some areas at the edge of major rivers \citep{haque2015impact}, those exposed to droughts in the Rangpur division (north-west) and the Chittagong Hill Tract (south-east). On the other hand, the drought-prone southern Rajshahi and Khulna divisions experience lower poverty levels due to the good irrigation coverage \citep{kam2005spatial}. 
However, when disaggregating estimates at the upazila level, the picture becomes more clear. The poorest regions present a remarkable variety and it is possible to detect few upazilas having lower poverty levels than the neighbour ones. They correspond to main urban centers (indicated by red dots in Figure \ref{fig:mapss}) such as Rangpur, Dinajpur, and Saidpur cities in Rangpur division; Sylhet, Kishoreganj, and Mymensingh in the north-east side and Raozan in the south-east side.

\begin{figure}
    \centering
    \includegraphics[width = .8\linewidth]{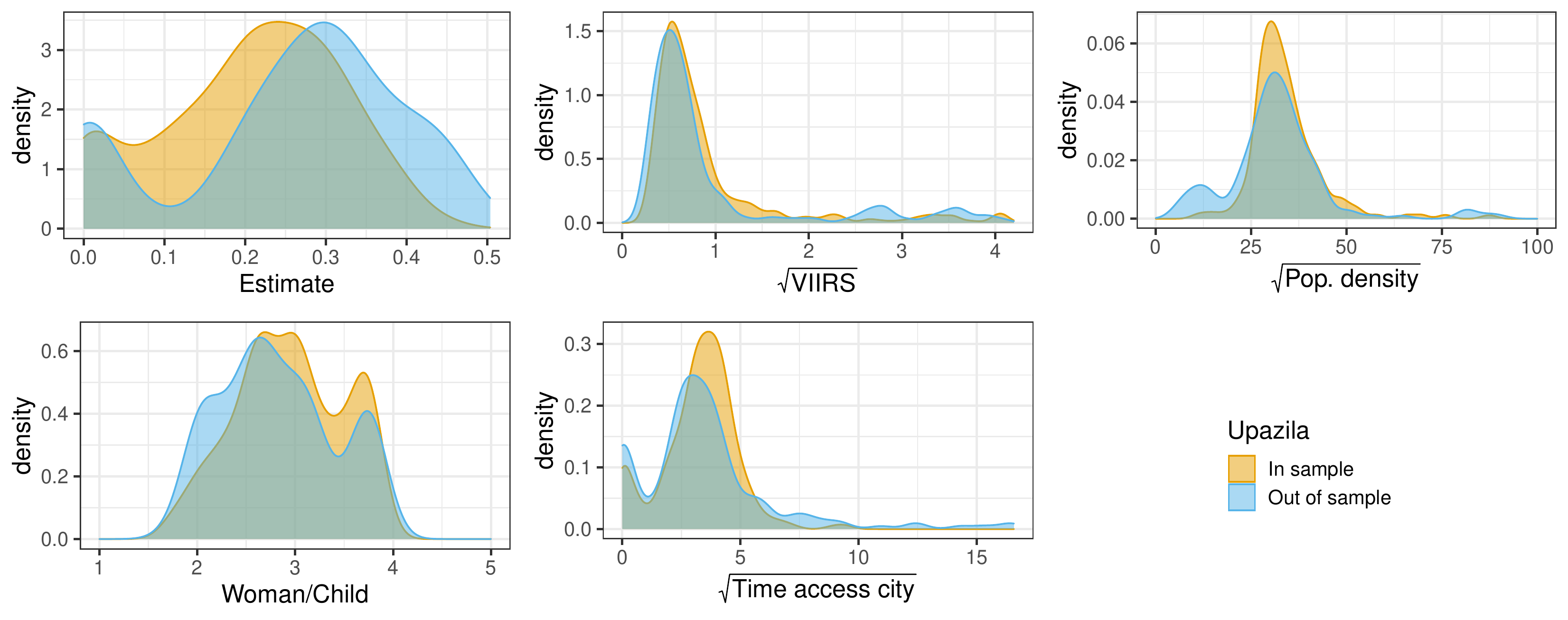}
    \caption{Densities of model-based estimates and the most relevant covariates compared between in-sample and out-of-sample sub-areas.}
    \label{fig:oos}
\end{figure}

\section{Conclusions}\label{sec:conclusions}
In this paper, we introduce a multi-scale modeling framework for poverty mapping at multiple spatial resolutions to avoid the scaling problem. The main aim is to produce reliable poverty maps that are coherent at different geographical layers. Such tools are valuable for poverty evaluation, poverty targeting and to develop place-based policies. On this line, we complement the proposal with a novel benchmarking algorithm that ensures the accordance of poverty estimates across levels. This allows to preserve poverty rate estimates and associated credible intervals within their support, restricted to $(0,1)$.


\begin{figure}
    \centering
    \includegraphics[width = .9\linewidth]{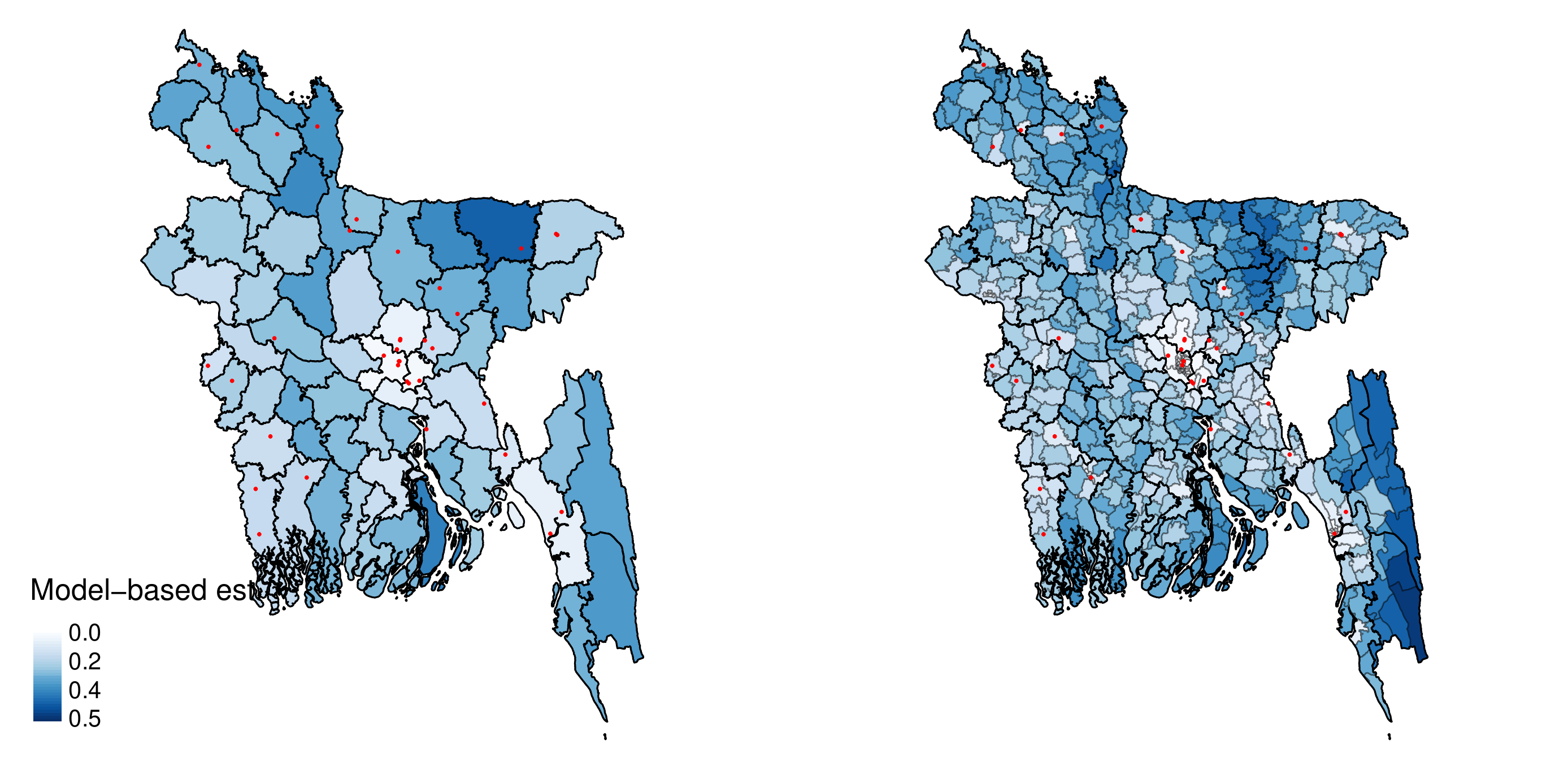}
    \caption{Model-based estimates at the zila level (left-hand side) and at the upazila level (right-hand side); red points indicate the 37 major cities.}
    \label{fig:mapss}
\end{figure}

The simulation study we carried out highlights the effectiveness of our proposal, showing better performances with respect to existing alternatives. In particular, this is evident for poverty rates laying in the tails of their distribution, which are less shrunken towards the average. In this way, our modeling strategy is able to retain domains with extremely high poverty rates: this has relevant practical implications since poverty hotspots need to be identified as being the target of poverty relief actions.
In our application on Bangladeshi data, rural and remote regions that mostly overlap with droughts- and floods-prone zones with high exposure to climate change effects, are classified as very poor. In this sense, geo-targeted policies may foster risk mitigation strategies (e.g., insurance) as well as  further expand irrigation coverages via specific water management policies \citep{pal2011evaluation, hossain2022smallholder}. Furthermore, our multi-scale poverty mapping approach permits us to spot critical zones located within larger administrative units. 

It is possible to extend our proposal to explicitly account for the spatial structure in the statistical model, similarly to \citet{aregay2016multiscale}. However, results of our application do not provide evidence of a residual spatial trend. This can be due to the spatial informative power of remote sensing covariates employed in the model.
Another future extension consists in considering also other quintiles of the WI index national distribution through a multivariate framework. Moreover, moving beyond poverty measurement, our proposal may be applied to map other well-being measures such as inequality indicators. 

\section*{Acknowledgments}
Work supported by the Data and Evidence to End Extreme Poverty (DEEP) research programme. DEEP is a consortium of the Universities of Cornell, Copenhagen, and Southampton led by Oxford Policy Management, in partnership with the World Bank -  Development Data Group and funded by the UK Foreign, Commonwealth \& Development Office. The work of Silvia De Nicolò was partially funded by the ALMA IDEA 2022 under Grant J45F21002000001 supported by the European Union - NextGenerationEU. The work of Aldo Gardini was partially supported by MUR on funds FSE REACT EU - PON R\&I 2014-2020 and PNR (D.M. 737/2021) for the RTDA\_GREEN project under Grant J41B21012140007.

\bibliographystyle{abbrvnat}
\bibliography{biblio}

\end{document}